\documentclass[journal,twoside,web]{ieeecolor}

\usepackage{generic}
\usepackage{lcsys}

\usepackage[utf8]{inputenc}
\usepackage[english]{babel}

\usepackage{amsmath,amssymb,amsfonts}
\usepackage{mathtools}
\usepackage{mathrsfs}
\usepackage{bm}
\usepackage{calrsfs}

\usepackage{amsthm}

\usepackage{graphicx}
\graphicspath{{images/}}

\usepackage[caption=false,font=footnotesize]{subfig}

\usepackage{booktabs}
\usepackage{array}
\usepackage{longtable}
\usepackage{multicol}

\usepackage{algorithmic}

\usepackage{textcomp}
\usepackage[version=4]{mhchem}
\usepackage{siunitx}
\usepackage{nomencl}
\usepackage{enumerate}

\usepackage{optidef}

\usepackage{cite}

\newcommand{\comment}[1]{}

\newcommand{\be}{\begin{equation}}
\newcommand{\ee}{\end{equation}}
\newcommand{\ben}{\begin{equation*}}
\newcommand{\een}{\end{equation*}}
\newcommand{\bena}{\begin{eqnarray*}}
\newcommand{\eena}{\end{eqnarray*}}
\newcommand{\bea}{\begin{eqnarray}}
\newcommand{\eea}{\end{eqnarray}}

\DeclareMathAlphabet{\mathcal}{OMS}{cmsy}{m}{n}
\theoremstyle{definition}
\newtheorem{theorem}{Theorem}[section]

\newtheorem{definition}[theorem]{Definition}
\newtheorem{assumption}[theorem]{Assumption}
\newtheorem{remark}[theorem]{Remark}

\begin{document}

\def\BibTeX{{\rm B\kern-.05em{\sc i\kern-.025em b}\kern-.08em
    T\kern-.1667em\lower.7ex\hbox{E}\kern-.125emX}}
\markboth{\journalname, VOL. XX, NO. XX, XXXX 2017}
{Author \MakeLowercase{\textit{et al.}}: Preparation of Papers for IEEE Control Systems Letters (August 2022)}

\title{Robust Safety Filters for Lipschitz-Bounded Adaptive Closed-Loop Systems with Structured Uncertainties}

\author{Johannes Autenrieb$^{1}$, Peter A. Fisher$^{2}$ and Anuradha Annaswamy$^{2}$
\thanks{$^{1}$ German Aerospace Center (DLR), Institute of Flight Systems, Department of Flight Dynamics and Simulation, 38108, Braunschweig, Germany.
(email: \texttt{johannes.autenrieb@dlr.de})}
\thanks{$^{2}$ Department of Mechanical Engineering, Massachusetts Institute of Technology, Cambridge, MA 02139, USA.
(email: \texttt{pafisher@mit.edu, aanna@mit.edu}) 
\newline The second author would like to acknowledge the support of the Boeing Strategic University Initiative.}
}

\maketitle
\thispagestyle{empty}

\begin{abstract}
Adaptive control provides closed-loop stability and reference tracking for uncertain dynamical systems through online parameter adaptation. These properties alone, however, do not ensure safety in the sense of forward invariance of state constraints, particularly during transient phases of adaptation. Control barrier function (CBF)–based safety filters have been proposed to address this limitation, but existing approaches often rely on conservative constraint tightening or static safety margins within quadratic program formulations. This paper proposes a reference-based adaptive safety framework for systems with structured parametric uncertainty that explicitly accounts for transient plant–reference mismatch. Safety is enforced at the reference level using a barrier-function-based filter, while adaptive control drives the plant to track the safety-certified reference. By exploiting Lipschitz bounds on the closed-loop tracking error dynamics, a tracking-error-dependent robust CBF condition is derived and equivalently reformulated as a convex second-order cone program (SOCP). The proposed safety-filter formulation reduces conservatism relative to fixed-margin CBF formulations by rendering the resulting safety constraints progressively less restrictive as the plant–reference tracking error decreases, while preserving formal guarantees of forward invariance and closed-loop stability.

\end{abstract}
\begin{IEEEkeywords}
adaptive control, control barrier functions, safety-critical control, robust safety filter, second-order cone programming
\end{IEEEkeywords}


\section{Introduction}
\label{Introduction}
The field of adaptive control has long focused on providing real-time control inputs for dynamic systems with uncertain parameters through parameter learning and control design based on stability analysis \cite{Narendra2005, Ioannou1996, Sastry_1989, Slotine1991, Krstic1995, Astrom_1995}. Using Lyapunov analysis, continuity arguments, and reference model formulations, adaptive control provides guarantees of closed-loop stability and asymptotic tracking despite parametric uncertainties.

In parallel, a growing body of research has emerged in the area of safety-critical systems \cite{ames2016control, nguyen2016exponential, xiao2019control}, motivated by the need to provide verifiable guarantees of safe behavior in systems with increasing levels of autonomy and interaction with uncertain environments. A central tool in this context is the concept of CBFs, which enforce forward invariance of a prescribed safe set by imposing inequality constraints on the system dynamics.

\begin{figure}
\centering
\includegraphics[width=\linewidth]{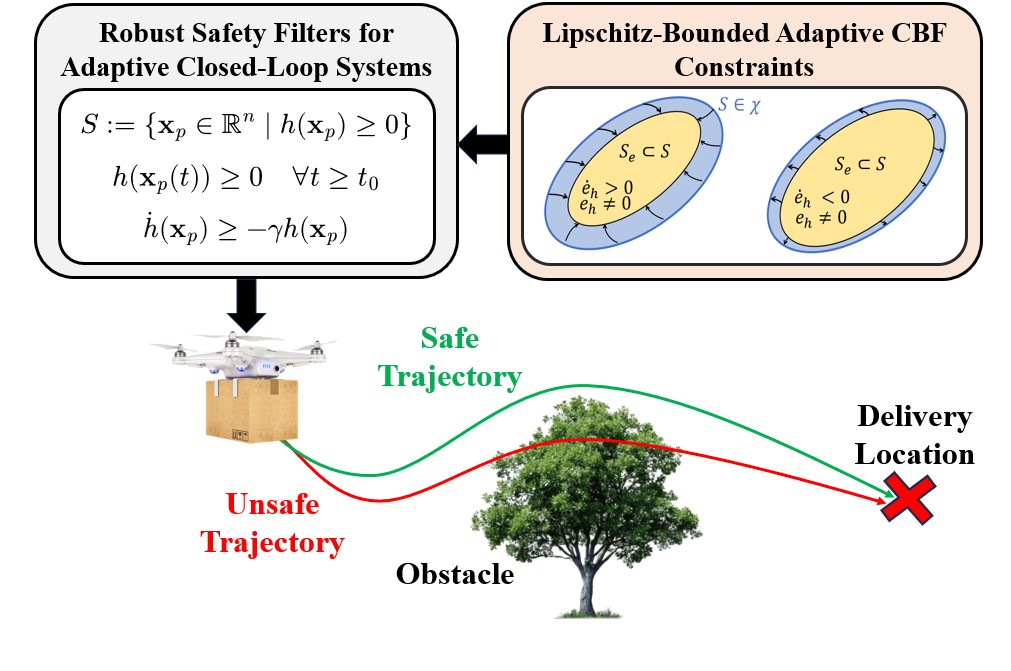}
\caption{Illustration of the proposed robust safety-critical adaptive control concept for systems with structured uncertainties.}
\label{fig:overview_concept}
\end{figure}

While adaptive control and CBF-based safety filtering are well established individually, integrating these frameworks presents fundamental challenges. Standard safety filters typically rely on accurate knowledge of the system dynamics. As a result, when applied to uncertain systems, constraint violations may occur during transient phases of learning. Consequently, several approaches have been proposed to address safety for systems with parametric uncertainty using CBF-based techniques \cite{Xiangru_2015, Jankovic_2018, Buch_2022, Taylor_2020_ml, Ames_2020a}. In \cite{Ames_2020a}, adaptive CBF conditions are formulated by embedding unknown parameters directly into the barrier constraints and requiring these conditions to hold uniformly over sets of admissible parameters and adaptation gains. While this approach provides formal safety guarantees, it represents model uncertainty primarily through structured drift dynamics, while the control effectiveness matrix is typically assumed to be known. Extensions that account for uncertainty in the input matrix have recently been considered in \cite{Wang2024}. Nevertheless, even in these formulations the resulting invariance conditions may become restrictive. To reduce conservatism, \cite{Lopez2021} introduces robust adaptive control barrier functions, which relax the invariance conditions while preserving safety guarantees.

Despite these developments, most existing approaches formulate safety conditions directly at the plant level and treat adaptation within the barrier constraints themselves. As a consequence, these formulations are not naturally integrated with classical model reference adaptive control architectures that are widely used to provide stability and performance guarantees. In earlier work \cite{Autenrieb2023}, a reference-based safety architecture was introduced in which a CBF safety filter certifies a safe reference command for a known reference model, while an adaptive controller drives the uncertain plant to track this certified reference. In that work, robustness against transient plant–reference mismatch during adaptation was addressed using heuristically chosen error-based robustness margins.

In this paper, we extend this concept to systems with structured parametric uncertainties affecting both the plant dynamics and the actuator effectiveness. The proposed architecture is illustrated in Fig.~\ref{fig:overview_concept}. A CBF-based safety filter generates a certified reference command for a known reference model, while an adaptive controller drives the uncertain plant to track this certified reference. Unlike the earlier formulation, where robustness against plant–reference mismatch during adaptation was handled through heuristic margins, the proposed approach incorporates explicit worst-case conditions on the plant dynamics. By exploiting Lipschitz bounds that relate barrier-function mismatch to the closed-loop tracking error, a robust barrier condition is obtained that captures plant–reference safety mismatch during learning. The resulting constraint depends on norms of the reference input and naturally leads to a convex SOCP formulation for the safety filter, providing a tractable mechanism for enforcing forward invariance while reducing conservatism and yielding smoother transient behavior compared with existing safety filters with formal guarantees.

\section{Preliminaries \& Problem Formulation}
\label{Preliminaries}
\subsection{Preliminaries}

We consider a nonlinear control-affine system of the form
\begin{equation}
\label{NonlinearPlant}
\dot{x} = f(x) + g(x)u
\end{equation}
where $x \in \chi \subset \mathbb{R}^n$ denotes the system state, $u \in \mathbb{R}^m$ the control input, and the vector fields $f: \chi \rightarrow \mathbb{R}^n$ and $g: \chi \rightarrow \mathbb{R}^{n \times m}$ are assumed to be locally Lipschitz continuous. In order to define safety, we consider a continuously differentiable function  $h: \chi \rightarrow \mathbb{R}$ and a set $S$ defined as the zero-superlevel set of $h$, yielding:
\begin{equation}
    \label{Safe_set_1}
    S \triangleq \begin{Bmatrix} x \in  \chi | h(x) \geq 0 \end{Bmatrix},
\end{equation}
\begin{equation}
    \label{Safe_set_2}
    \partial S \triangleq \begin{Bmatrix} x \in  \chi | h(x) = 0 \end{Bmatrix},
\end{equation}
\begin{equation}
    \label{Safe_set_3}
    int(S) \triangleq \begin{Bmatrix} x \in  \chi | h(x) > 0 \end{Bmatrix}.
\end{equation}

\begin{definition}
The set $S$ is forward invariant for the system \eqref{NonlinearPlant}, if for every $x_0 \in S$, it follows $x \in S$ for $x(0) = x_0$ and all $t \in I(x_0) = [0,\tau_{max} = \infty)$ \cite{Nagumo_1942, Blanchini_1999}.
\end{definition}

We now introduce the notion of a CBF. The existence of such a function provides a sufficient condition for rendering the set (S) forward invariant \cite{Ames_2014,Ames_2017}.
\begin{definition}
Let $S \subset \chi$ be the zero-superlevel set of a continuously differentiable function $h: \chi \rightarrow \mathbb{R}$. The function $h$ is a CBF for $S$ if there exists a class $\mathcal{K}_{\infty}$ function $\alpha(h(x))$ such that for the system defined in \eqref{NonlinearPlant} we obtain:
\begin{equation}
    \label{eq:ControlBarrierFunction}
    \sup_{u\in \mathbb{R}^m} \frac{\partial h}{\partial x}\begin{bmatrix} f(x) + g(x)u \end{bmatrix}  \geq -\alpha(h(x)),
\end{equation}
for all $x \in S$.
\end{definition}

\begin{theorem}
\label{theorem_LCBF}
Given a set \( S \subset \chi \), defined via the associated CBF as in \eqref{Safe_set_1}, any Lipschitz continuous controller \( {k}({x}) \in K_{S}({x}) \) with 
\begin{equation}
    K_{S} ({x}) = \big\{ {u} \in U : L_{{f}} h({x}) + L_{{g}} h({x}) {u} + \alpha(h({x})) \ge 0 \big\}
    \label{definition_safe_controller}
\end{equation}
renders the system \eqref{NonlinearPlant} forward invariant within \( S \) \cite{XU2015}. 
\end{theorem}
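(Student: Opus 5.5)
The argument is the standard comparison-lemma proof that a controller satisfying the CBF inequality \eqref{eq:ControlBarrierFunction} pointwise renders $S$ forward invariant. We apply it to the closed loop obtained by substituting $u=k(x)$ into \eqref{NonlinearPlant}.

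\textbf{Step 1: well-posedness.} Because $f$ and $g$ are locally Lipschitz and $k$ is Lipschitz, the closed-loop vector field $F(x)=f(x)+g(x)k(x)$ is locally Lipschitz on $\chi$. Hence for every $x_0\in S$ there is a unique solution $x(t)$ on a maximal interval $I(x_0)=[0,\tau_{max})$.

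\textbf{Step 2: scalar differential inequality.} Define $\eta(t)=h(x(t))$. Since $h$ is continuously differentiable, $\eta$ is $C^1$ and
\[
\dot\eta(t)=L_fh(x(t))+L_gh(x(t))\,k(x(t)).
\]
Because $k(x)\in K_S(x)$, this gives $\dot\eta\ge-\alpha(\eta)$ whenever the condition holds at $x(t)$.

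\textbf{Step 3: comparison.} Consider the scalar initial value problem
\[
\dot y=-\alpha(y),\qquad y(0)=\eta(0)\ge 0.
\]
Taking $\alpha$ locally Lipschitz (or extending it to an extended class $\mathcal{K}_\infty$ function on $\mathbb{R}$), the comparison lemma gives $\eta(t)\ge y(t)$ on $I(x_0)$. The solution $y$ satisfies $y(t)\ge0$ for all $t$: $y\equiv0$ is an equilibrium, and $\dot y\le 0$ for $y>0$ with uniqueness, so $y$ cannot cross zero. Therefore $h(x(t))\ge0$, i.e.\ $x(t)\in S$ for all $t\in I(x_0)$.

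\textbf{Main obstacle.} The comparison argument is routine. The delicate points are the following.

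\emph{Where the CBF condition holds.} The definition only requires \eqref{eq:ControlBarrierFunction} for $x\in S$, so a priori $k(x)\in K_S(x)$ is only known on $S$. I will handle this with a first-exit argument. Suppose $t^*=\inf\{t:\ h(x(t))<0\}$ is finite. At $t^*$ we have $x(t^*)\in\partial S$, and the inequality gives $\dot\eta(t^*)\ge-\alpha(0)=0$. This is not by itself enough to exclude exit, so I would instead apply the comparison on $[0,t^*]$, where the condition holds, and then use continuity and uniqueness of the solution near $\partial S$ (Nagumo's theorem for closed sets with locally Lipschitz dynamics) to conclude that no exit occurs.

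\emph{Global existence.} Showing $\tau_{max}=\infty$ as the definition demands requires that solutions stay bounded. If $S$ is compact this follows from invariance. Otherwise I would state the result on the maximal interval, as is standard in \cite{Ames_2017,XU2015}, and note that completeness must be assumed.
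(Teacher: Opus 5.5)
Your Steps 1--3 are the standard comparison argument. That is all the paper relies on: it gives no proof and defers to \cite{XU2015}. The gap is in the part you flagged yourself. Because $\alpha\in\mathcal{K}_\infty$ and the CBF inequality is only required on $S$, the hypothesis $k(x)\in K_S(x)$ is only available for $x\in S$, so Step 3 cannot be run on all of $I(x_0)$. Your comparison on $[0,t^*]$ adds nothing, since $h(x(t))\ge 0$ there by definition of $t^*$, so the whole burden falls on the appeal to Nagumo. Nagumo's theorem needs the subtangential condition $F(x)\in T_S(x)$ at boundary points. The inequality $\nabla h(x)^\top F(x)\ge -\alpha(0)=0$ gives this only when $\nabla h(x)\neq 0$, which is not among the hypotheses. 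Without it the step fails. Take $\chi=\mathbb{R}$, $f\equiv 1$, $g\equiv 0$, $h(x)=-x^2$. Then $S=\{0\}$, the CBF inequality holds on $S$ (both sides vanish), and $K_S(0)=U$, so every Lipschitz $k$ is admissible. Yet $x(t)=t$ leaves $S$ immediately. So the statement as literally written needs an extra hypothesis, and your proof has to name one.

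Either of two standard hypotheses closes the argument. (i) As in \cite{XU2015}, require the inequality on an open set $D\supset S$, with $\alpha$ an extended class-$\mathcal{K}$ function (defined on $\mathbb{R}$ and negative for negative arguments). Then $\dot\eta\ge-\alpha(\eta)$ holds along the trajectory while it stays in $D$, and your Step 3 goes through with no first-exit detour. You do not even need the comparison lemma or Lipschitzness of $\alpha$. If $\eta(t_1)<0$, let $t_0$ be the last time before $t_1$ with $\eta(t_0)=0$. Pick $t_2\in(t_0,t_1]$ close enough to $t_0$ that $x$ stays in $D$ on $[t_0,t_2]$. Then $\dot\eta\ge-\alpha(\eta)>0$ on $(t_0,t_2]$, so $\eta(t_2)>\eta(t_0)=0$, contradicting $\eta(t_2)<0$. (ii) Alternatively, keep the inequality only on $S$ but assume $\nabla h(x)\neq 0$ wherever $h(x)=0$. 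Then $T_S(x)=\{v:\nabla h(x)^\top v\ge 0\}$ at such points, and the inequality there is exactly subtangentiality. Nagumo's theorem together with the uniqueness from your Step 1 then gives invariance, and Step 3 is no longer needed. Your caveat about $\tau_{max}=\infty$ is appropriate, since the paper's definition of forward invariance tacitly assumes completeness.
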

One way to construct a controller satisfying \eqref{definition_safe_controller} is through a safety filter formulated as a quadratic program (QP), as proposed in~\cite{Ames_2014}: 
\begin{equation}
\begin{aligned}
{u} = \arg\min_{{u}\in U} \quad
& \|{u} - {u}^\star\|_2^2 \\
\text{s.t.}\quad
& L_{{f}} h({x})
  + L_{{g}} h({x})\,{u}
  \ge -\alpha\!\left(h({x})\right),
\end{aligned}
\label{eq:qp_cbf}
\end{equation}
where ${u}^\star$ is a performance-oriented, but potentially not safe, control input.

\subsection{Problem Formulation}
\label{Problem Formulation}
We consider the following linear time-invariant (LTI) system representing a plant with partially unknown dynamics:
\begin{equation}
    \dot{\bm{x}}_p = A_p \bm{x}_p + B_p \Lambda \bm{u},
    \label{eq:LinearPlantModel_Problem}
\end{equation}
where \(\bm{x}_p \in \mathbb{R}^n\) is the plant state vector, \(\bm{u} \in \mathbb{R}^m\) is the control input, \(A_p \in \mathbb{R}^{n \times n}\) is an unknown but assumed constant system matrix, \(B_p \in \mathbb{R}^{n \times m}\) is a known full-rank input matrix and $\Lambda \in \mathbb{R}^{m \times m}$ is assumed to be unknown, with $\Lambda$ being diagonal and strictly positive constant matrix.

The objective of the proposed control design is to determine a $u$ for \eqref{eq:LinearPlantModel_Problem} such that the plant state $x_p$ tracks a desired reference $ x_d$ and that for any initial condition $x_0 := x(t_0) \in S$, it is ensured that the plant state vector $x_p$ stays within the safe set $S \in \mathbf{R}^n$. Equivalently, the control input must ensure that the CBF condition is satisfied, i.e., \(h(x_p(t)) \geq 0\) for all \(t \geq t_0\).

\section{Adaptive Closed-Loop System Design}
\label{Safe_OMRAC}
The following assumptions are made regarding the unknown parameters of the considered plant in \eqref{eq:LinearPlantModel_Problem}:
\begin{assumption}
\label{assumption_matching_condition}
Constant matrices $\Theta_x^*$ and $\Theta_r^*$ exist that solve the following:
\begin{equation}
\label{matching_condition_1}
A_m = A_p + B_p \Lambda \Theta_x^*,
\end{equation}
\begin{equation}
\label{matching_condition_2}
B_m = B_p \Lambda \Theta_r^*.
\end{equation}
\end{assumption}
\begin{assumption}
The input uncertainty $\Lambda$ is a diagonal positive definite matrix of the form:
\begin{equation}
    \label{Lambda_assumption}
    \Lambda =
    \begin{bmatrix}
        \lambda_1   & \hdots    & 0\\
        \vdots      & \ddots    & 0\\
        0           & \hdots    & \lambda_m
    \end{bmatrix}.
\end{equation}
\end{assumption}

The objective of the adaptive control design is to enforce stable tracking of a prescribed reference model by a linear plant with unknown \(A_p\) and \(\Lambda\). To define the desired closed-loop behavior, we consider the following reference model:
\begin{equation}
    \dot{{x}}_m = A_m {x}_m + B_m {r}.
    \label{eq:reference_model}
\end{equation}
We consider the following adaptive controller:
\begin{equation}
    {u} = \widehat{\Theta}_x {x}_p + \widehat{\Theta}_r {r},
    \label{eq:adaptive_control_law}
\end{equation}
where \(\widehat{\Theta}_x \in \mathbb{R}^{m \times n}\) is the current estimate of the feedback gain matrix, \(\widehat{\Theta}_r \in \mathbb{R}^{m \times p}\) is the current estimate of the feedforward gain matrix, and \({r} \in \mathbb{R}^p\) is a desired reference input signal. We define the tracking error between the plant in \eqref{eq:LinearPlantModel_Problem} and the reference model in \eqref{eq:reference_model} by:
\begin{equation}
    e_x(t)=x_p(t)-x_m(t)
    \label{eq:tracking_error}.
\end{equation}

The time-varying parameters in \eqref{eq:adaptive_control_law} are adjusted using the following adaptation laws: 
\begin{equation}
\label{theta_x_adaption_law}
    \dot{\widehat{\Theta}}_x(t) = - \Gamma_x x_p(t) e_x(t)^\top P B_p\,, \,\,\,\,\,\, \Gamma_x>0
\end{equation}
\begin{equation}
\label{theta_r_adaption_law}
    \dot{\widehat{\Theta}}_r(t) = - \Gamma_r r(t) e_x(t)^\top P B_p\,, \,\,\,\,\,\, \Gamma_r>0
\end{equation}
where $P$ is the solution of the Lyapunov equation $A_m^\top P + P A_m = -Q$, where $Q>0$. 

Further, we define the parameter estimation errors:
\begin{align}
    \widetilde{\Theta}_x &:= \Theta_x^* - \widehat{\Theta}_x, \label{eq:gain_error_x}\\
    \widetilde{\Theta}_r &:= \Theta_r^* - \widehat{\Theta}_r, \label{eq:gain_error_r}
\end{align} 
as well as a corresponding output error $e_u(t)=u(t)-u^*(t)$, which will be useful to quantify the safety of the adaptive controller, where $u^*(t)$ represents the ideal control input and is defined as:
\begin{equation}
\label{eq:idealu}
u^*(t)  = \Theta_x^* x_p(t) + \Theta_r^* r_s(t).
\end{equation}

\begin{theorem}
\label{Theorem_Basic_Adaptive_Controller}
The overall closed-loop adaptive system defined by the plant in \eqref{eq:LinearPlantModel_Problem}, the reference model in \eqref{eq:reference_model}, the control input in \eqref{eq:adaptive_control_law}, and the adaptation laws in \eqref{theta_x_adaption_law} and \eqref{theta_r_adaption_law} has globally bounded solutions for any initial conditions $x_p(t_0)$, $x_m(t_0)$, $\widehat\Theta_x(t_0)$, and $\widehat\Theta_r(t_0)$. Moreover, if the reference signal $r(t)$ is bounded, both the errors $e_x$ and $e_u(t)$ converge to zero as $t \rightarrow \infty$.
\end{theorem}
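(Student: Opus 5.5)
We follow the standard Lyapunov argument for model reference adaptive control. First, substituting the control law \eqref{eq:adaptive_control_law} into the plant \eqref{eq:LinearPlantModel_Problem} and using the matching conditions \eqref{matching_condition_1}--\eqref{matching_condition_2} gives
\begin{equation*}
\dot{x}_p = A_m x_p + B_m r - B_p\Lambda\big(\widetilde{\Theta}_x x_p + \widetilde{\Theta}_r r\big).
\end{equation*}
Subtracting the reference model \eqref{eq:reference_model} then yields the error dynamics
\begin{equation*}
\dot{e}_x = A_m e_x - B_p\Lambda\big(\widetilde{\Theta}_x x_p + \widetilde{\Theta}_r r\big).
\end{equation*}
The output error has the same structure, $e_u = -(\widetilde{\Theta}_x x_p + \widetilde{\Theta}_r r)$, where we identify $r_s$ with $r$ in \eqref{eq:idealu}. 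The error equation can therefore be written as $\dot{e}_x = A_m e_x + B_p\Lambda e_u$.

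Next, we take the candidate Lyapunov function
\begin{equation*}
V = e_x^\top P e_x + \mathrm{tr}\big(\widetilde{\Theta}_x^\top \Lambda \Gamma_x^{-1}\widetilde{\Theta}_x\big) + \mathrm{tr}\big(\widetilde{\Theta}_r^\top \Lambda \Gamma_r^{-1}\widetilde{\Theta}_r\big).
\end{equation*}
For the weighting $\Lambda\Gamma^{-1}$ to be symmetric positive definite, we take $\Gamma_x$ and $\Gamma_r$ to be positive scalars, or diagonal matrices, so that they commute with the diagonal matrix $\Lambda$. If the dimensions in \eqref{theta_x_adaption_law}--\eqref{theta_r_adaption_law} are read literally, the gains must be placed so that $\dot{\widehat\Theta}_x = -\Gamma_x B_p^\top P e_x x_p^\top$; we adopt this consistent form. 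Positive definiteness of $P$ and $\Lambda$ makes $V$ radially unbounded in $(e_x,\widetilde{\Theta}_x,\widetilde{\Theta}_r)$.

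Differentiating $V$ along trajectories, the cross term $-2e_x^\top P B_p\Lambda(\widetilde{\Theta}_x x_p + \widetilde{\Theta}_r r)$ is rewritten with the trace identity $a^\top M b = \mathrm{tr}(M b a^\top)$. Since $\dot{\widetilde\Theta} = -\dot{\widehat\Theta}$, the adaptation laws cancel this term exactly. This cancellation is the key bookkeeping step and the likeliest place for errors: it uses the diagonal structure of $\Lambda$ and the commutation with $\Gamma$ noted above. The result is
\begin{equation*}
\dot{V} = -e_x^\top Q e_x \le 0 .
\end{equation*}
Hence $e_x$, $\widetilde{\Theta}_x$ and $\widetilde{\Theta}_r$ are bounded for all initial conditions, so $\widehat\Theta_x$ and $\widehat\Theta_r$ are bounded as well. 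The state $x_m$ is bounded whenever $r$ is bounded, because $A_m$ is Hurwitz, as guaranteed by the Lyapunov equation with $Q>0$. Then $x_p = e_x + x_m$ is bounded, and so is $u$, which gives global boundedness. Strictly speaking, this global boundedness claim needs $r$ bounded, or at least bounded on finite intervals to rule out finite escape. Since the closed loop is linear in $x_p$ with bounded gains, solutions exist for all time, and we use boundedness of $r$ throughout.

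For convergence, integrating $\dot V$ gives $\int_{t_0}^\infty e_x^\top Q e_x\,dt \le V(t_0)$, so $e_x \in \mathcal{L}_2$. Boundedness of all signals makes $\dot e_x$ bounded, so $e_x$ is uniformly continuous. Barbalat's lemma then gives $e_x \to 0$.

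To obtain $e_u \to 0$, which we expect to be the main obstacle, we proceed as follows. Since $\dot e_x = A_m e_x + B_p\Lambda e_u$, we have $e_x \to 0$, and $B_p\Lambda$ has full column rank ($B_p$ full rank, $\Lambda>0$). It therefore suffices to show $\dot e_x \to 0$. For this we apply Barbalat's lemma to $\dot e_x$ itself: $e_x$ has a limit, and $\dot e_x$ is uniformly continuous provided $\ddot e_x$ is bounded. That requires $\dot e_u$ to be bounded, which holds if $\dot r$ is bounded, since $\dot{\widetilde\Theta}$, $\dot x_p$ and the other terms are bounded. We would therefore add the mild assumption that $\dot r$ is bounded, i.e.\ $r$ is uniformly continuous; otherwise $e_u \to 0$ can fail. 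With it, $\dot e_x \to 0$, and hence $e_u = (B_p\Lambda)^{\dagger}(\dot e_x - A_m e_x) \to 0$.
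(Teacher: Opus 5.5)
Your proposal is correct and follows the paper's own route. Both use the error dynamics from the matching conditions, the $\Lambda$-weighted trace Lyapunov function giving $\dot V = -e_x^\top Q e_x$, Barbalat's lemma for $e_x \to 0$, and then the LTI relation $\dot e_x = A_m e_x + B_p\Lambda e_u$ with a bounded $\dot e_u$ to conclude $e_u \to 0$. Your version is more careful than the paper's sketch in three places. Your sign $\dot e_x = A_m e_x - B_p\Lambda(\widetilde{\Theta}_x x_p + \widetilde{\Theta}_r r)$ is the one under which the adaptation laws actually cancel the cross term; the paper's displayed error equation has the opposite sign. Your transposed adaptation law fixes the dimension mismatch. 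Your added hypothesis that $\dot r$ is bounded, together with the full column rank of $B_p\Lambda$, makes explicit what the paper's phrase ``with a bounded derivative'' tacitly assumes.
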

\begin{proof}
The proof follows from standard adaptive control arguments, since the error dynamics is of the form
\begin{equation}
    \label{error_dyanmics_1}
    \dot{e}_x = A_m e_x + B_p \Lambda (\widetilde{\Theta}_x x_p + \widetilde{\Theta}_r r_s).
\end{equation} Taking into account \eqref{eq:gain_error_x} and \eqref{eq:gain_error_r}, we consider the following Lyapunov function candidate
\begin{equation}
\begin{split}    
    \label{Lyapunov_function1}
    V = &\frac{1}{2} e_x^\top P e_x + \frac{1}{2} \operatorname{Tr} [ \widetilde{\Theta}_x \Gamma_x^{-1} \widetilde{\Theta}_x^\top \Lambda ] \\
   + &\frac{1}{2} \operatorname{Tr}  [ \widetilde{\Theta}_r\Gamma_r^{-1} \widetilde{\Theta}_r^\top \Lambda ].
\end{split}
\end{equation}
Using \eqref{theta_x_adaption_law}, \eqref{theta_r_adaption_law}, and \eqref{error_dyanmics_1}, straightforward algebra yields $\dot{V} = -e_x^\top Qe_x$. Since $\dot{V} \leq 0$, we immediately obtain $e_x, \hat{\Theta}_x, \hat{\Theta}_r \in \mathcal{L}_\infty$, and boundedness of $x_p$ follows from boundedness of $r$ and thus boundedness of $x_m$. Additionally, the form of $\dot{V}$ gives us $e_x \in \mathcal{L}_2$. Now, as $e_x$ is bounded and has a bounded derivative, an application of the Barbalat's Lemma leads to $\lim_{t \to \infty} e_x (t) = 0$ \cite{Narendra2005}. From \eqref{error_dyanmics_1}, it follows that $e_u(t)$ is an input into an LTI system, with a bounded derivative, whose state is $e_x(t)$; therefore it follows that $\lim_{t \to \infty} e_u (t) = 0$. {\color{black}We refer to~\cite{Narendra2005,Ioannou1996,Sastry_1989} for further details of the proof.
}
\end{proof}

\section{Robust Safety for Lipschitz-Bounded Adaptive Closed-Loop Systems with Structured Uncertainties}
\label{Safe_CCRM}
The adaptive control design from the previous section guarantees closed-loop stability and asymptotic tracking of the reference model \eqref{eq:reference_model} by the uncertain plant \eqref{eq:LinearPlantModel_Problem}, but does not ensure forward invariance for a defined safe set (i.e., satisfaction of state constraints for all \(t \ge t_0\)). Since the standard CBF condition \eqref{eq:ControlBarrierFunction} depends explicitly on the unknown plant dynamics, direct plant-level safety enforcement is not feasible. This motivates the introduction of a safety mechanism compatible with the adaptive control architecture. We define the safe set as
\begin{equation}
    S := \left\{ {x}_p \in \chi \;\middle|\; h({x}_p) \geq 0 \right\}.
    \label{eq:safe_set_thm_min}
\end{equation}
Safety of \eqref{eq:LinearPlantModel_Problem} is achieved if the condition
\begin{equation}
    h({x}_p(t)) \geq 0 \quad \forall t \geq t_0
    \label{eq:safety_requirement}
\end{equation}
is satisfied. A standard sufficient condition \cite{ames2016control}, to ensure forward invariance of \(S\) for the plant \eqref{eq:LinearPlantModel_Problem} is
\begin{equation}
    \dot{h}({x}_p) \geq -\gamma h({x}_p),
    \label{eq:standard_CBF_condition_plant}
\end{equation}
with \(\gamma > 0\). For the uncertain plant, however, the matrices \(A_p\) and \(\Lambda\) are unknown, and \(\dot{h}({x}_p)\) cannot be evaluated exactly. As a result, the condition \eqref{eq:standard_CBF_condition_plant} cannot be exactly enforced at the plant level. To address this limitation, the indirect safety approach proposed in \cite{Autenrieb2023,Fisher2026} is adopted: safety is enforced on the known reference model, while the adaptive controller compensates for the mismatch between the plant and the reference model (cf. Fig.~\ref{fig:control_architecture}).

\begin{figure*}
    \centering
    \includegraphics[width=0.7\textwidth]{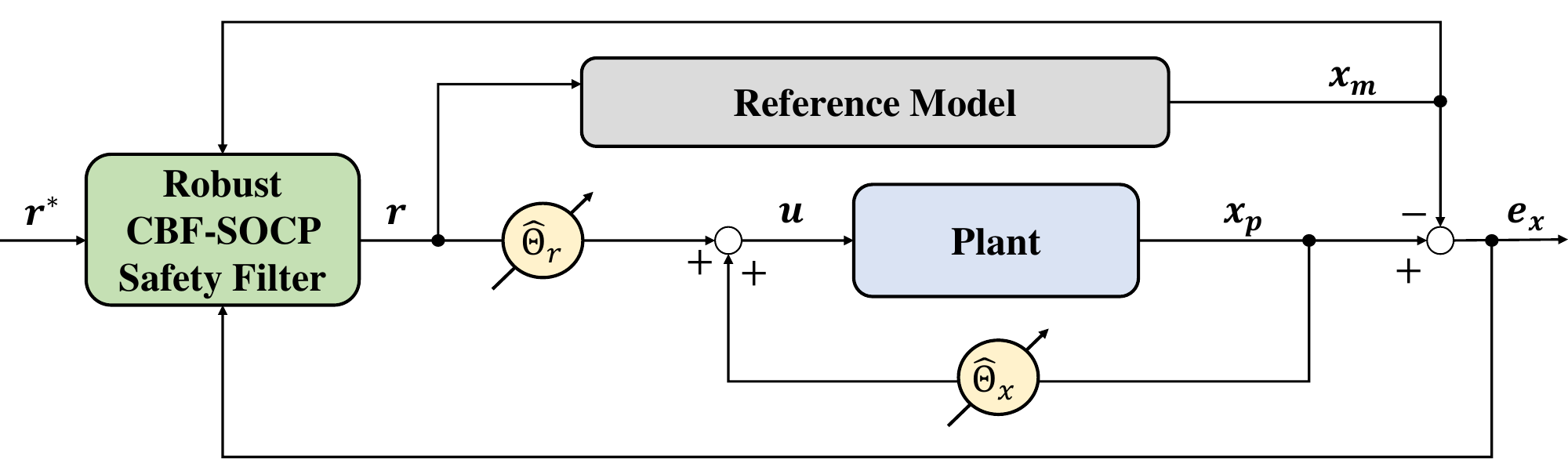}
    \caption{Concept of the proposed safe adaptive control architecture.}
    \label{fig:control_architecture}
\end{figure*}

To retain safety guarantees during transient adaptation, tracking errors and parameter mismatch must be explicitly accounted for in the safety filter design. To do so, we define the barrier function error
\begin{equation}
    e_h := h({x}_p) - h({x}_m).
    \label{eq:barrier_function_error}
\end{equation}

This error captures the deviation between the actual safety level of the plant and that of the reference model. We can express \eqref{eq:barrier_function_error} as:
\begin{equation}
    h({x}_p) = h({x}_m) + e_h.
\label{eq:barrier_function_value_plant_with_error}
\end{equation}

Differentiating \eqref{eq:barrier_function_value_plant_with_error} with respect to time yields:
\begin{equation}
    \dot{h}({x}_p) = \dot{h}({x}_m) + \dot{e}_h,
\label{eq:barrier_function_value_plant_with_error_diff}
\end{equation}
where
\begin{align}
    \dot{h}({x}_p) &= \left. \frac{\partial h}{\partial {x}} \right|_{{x}_p} \cdot \dot{{x}}_p 
    = \nabla h({x}_p)^\top (A_p {x}_p + B_p {u}), \\
    \dot{h}({x}_m) &= \left. \frac{\partial h}{\partial {x}} \right|_{{x}_m} \cdot \dot{{x}}_m 
    = \nabla h({x}_m)^\top (A_m {x}_m + B_m {r}). \label{eq:hdot_xm}
\end{align}

\eqref{eq:standard_CBF_condition_plant}, \eqref{eq:barrier_function_value_plant_with_error} and \eqref{eq:barrier_function_value_plant_with_error_diff} allow us to obtain the following composite safety condition:
\begin{align}
    \dot{h}({x}_m) + \dot{e}_h &\geq -\gamma (h({x}_m) + e_h),
    \label{eq:composite_barrier_condition_1}
\end{align}
which can be reformulated to
\begin{align}
    \dot{h}({x}_m) + \dot{e}_h + \gamma e_h &\geq -\gamma h({x}_m).
    \label{eq:composite_barrier_condition_2}
\end{align}

This inequality expresses a sufficient condition for safety that depends only on the reference model and the mismatch \(e_h, \dot{e}_h\), which we now aim to bound. A more conservative version of  \eqref{eq:composite_barrier_condition_2} is given by:
\begin{equation}
    \dot{h}({x}_m) - | \dot{e}_h | - \gamma |e_h|
    \geq -\gamma h({x}_m).
    \label{eq:conservative_composite_barrier_condition}
\end{equation}

\begin{assumption}
\label{ass:barrier_mismatch_bounds_min}
Let $h:\mathbb{R}^n\to\mathbb{R}$ be continuously differentiable. 
Considering the tracking error in \eqref{eq:tracking_error} and the  barrier function error \eqref{eq:barrier_function_error}. There exist constants $L_1,L_2>0$ such that, for all $t\ge t_0$ in the region of interest,
\begin{equation}
    | e_h | = | h({x}_p) - h({x}_m) | \leq L_1 \| {e}_x \|,
    \label{eq:Lipschitzt_bound_state}
\end{equation}
\begin{equation}
    | \dot{e}_h | = | \dot{h}({x}_p) - \dot{h}({x}_m) | \leq L_2 \| \dot{{e}}_x \|.
    \label{eq:Lipschitzt_bound_dynamics}
\end{equation}
\end{assumption}

\begin{assumption}
\label{ass:bounded_uncertainties_min}
There exist known constants $\overline{\Theta}_x>0$, $\overline{\Theta}_r>0$, and $\overline{\Lambda}>0$ such that, for all $t\ge t_0$,
\begin{equation}
\|\widetilde{\Theta}_x\|\le \overline{\Theta}_x,\qquad
\|\widetilde{\Theta}_r\|\le \overline{\Theta}_r,\qquad
\|\Lambda\|\le \overline{\Lambda}.
\label{eq:bounded_uncertainties_min}
\end{equation}
\end{assumption}

We choose the reference input $r(t)$ at every $t \geq t_0$ such that the following inequality holds:
\begin{align}
\dot h( x_m(t)) \ge &-\gamma h( x_m(t))
+ \gamma L_1\| e_x(t)\| + L_2\|A_m\|\,\| e_x(t)\| \notag\\
&+ L_2\|B_p\|\,\overline{\Lambda}\Big(\overline{\Theta}_x\| x_p(t)\|+\overline{\Theta}_r\| r(t)\|\Big),
\label{eq:final_safety_condition_uncertainties}
\end{align}
where $\dot{h}(x_m)$ is given in \eqref{eq:hdot_xm}. We will discuss how to choose such an $r(t)$ in Section \ref{subsec:SOCP}. The following result provides our guarantee of safety resulting from satisfaction of \eqref{eq:final_safety_condition_uncertainties}:
\begin{theorem}
\label{thm:indirect_robust_safety_min}
Consider the plant in \eqref{eq:LinearPlantModel_Problem}, the reference model in \eqref{eq:reference_model}, the tracking error defined in \eqref{eq:tracking_error} and the safe set \eqref{eq:safe_set_thm_min} under Assumptions~\ref{assumption_matching_condition}--\ref{ass:bounded_uncertainties_min}. Let $x_p(t_0) \in \mathcal{S}$, and let the reference input $ r(t)$ be chosen such that \eqref{eq:final_safety_condition_uncertainties} holds for all $t \geq t_0$.
Then, we have $x_0(t) \in \mathcal{S}\ \forall t \geq t_0$.
\end{theorem}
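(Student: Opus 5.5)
The plan is to show that \eqref{eq:final_safety_condition_uncertainties} implies the plant-level condition \eqref{eq:standard_CBF_condition_plant}, namely $\dot h(x_p)\ge -\gamma h(x_p)$ along closed-loop trajectories. Forward invariance of $\mathcal{S}$ then follows by a comparison-lemma argument. The chain of reductions already set up in the text, \eqref{eq:composite_barrier_condition_2}$\Leftarrow$\eqref{eq:conservative_composite_barrier_condition}, shows it suffices to verify $\dot h(x_m)-|\dot e_h|-\gamma|e_h|\ge-\gamma h(x_m)$. So we need upper bounds on $|e_h|$ and $|\dot e_h|$ that are dominated by the extra terms on the right-hand side of \eqref{eq:final_safety_condition_uncertainties}.

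\textbf{Bounding the mismatch terms.} For $|e_h|$ we use Assumption~\ref{ass:barrier_mismatch_bounds_min}, \eqref{eq:Lipschitzt_bound_state}, which gives $\gamma|e_h|\le\gamma L_1\|e_x\|$. For $|\dot e_h|$ we first apply \eqref{eq:Lipschitzt_bound_dynamics}, giving $|\dot e_h|\le L_2\|\dot e_x\|$. We then substitute the error dynamics \eqref{error_dyanmics_1} from the proof of Theorem~\ref{Theorem_Basic_Adaptive_Controller}, which hold under the matching conditions of Assumption~\ref{assumption_matching_condition} with $r_s=r$ the filtered reference actually fed to the model and controller. The triangle inequality and submultiplicativity yield
\begin{equation*}
\|\dot e_x\|\le \|A_m\|\|e_x\|+\|B_p\|\|\Lambda\|\big(\|\widetilde\Theta_x\|\|x_p\|+\|\widetilde\Theta_r\|\|r\|\big).
\end{equation*}
Assumption~\ref{ass:bounded_uncertainties_min} then replaces $\|\Lambda\|,\|\widetilde\Theta_x\|,\|\widetilde\Theta_r\|$ by $\overline\Lambda,\overline\Theta_x,\overline\Theta_r$. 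Summing the two bounds shows that the right-hand side of \eqref{eq:final_safety_condition_uncertainties} is at least $-\gamma h(x_m)+\gamma|e_h|+|\dot e_h|$. Hence \eqref{eq:conservative_composite_barrier_condition} holds, and since $-|a|\le a$, so do \eqref{eq:composite_barrier_condition_2} and \eqref{eq:composite_barrier_condition_1}. By \eqref{eq:barrier_function_value_plant_with_error}--\eqref{eq:barrier_function_value_plant_with_error_diff} this is exactly $\dot h(x_p(t))\ge-\gamma h(x_p(t))$ for all $t\ge t_0$.

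\textbf{Concluding invariance.} Set $\phi(t)=h(x_p(t))$, which is absolutely continuous since $h\in C^1$ and $x_p$ is a closed-loop solution. Then $\frac{d}{dt}\big(e^{\gamma(t-t_0)}\phi(t)\big)\ge0$, so $\phi(t)\ge e^{-\gamma(t-t_0)}\phi(t_0)\ge0$ because $x_p(t_0)\in\mathcal S$. Thus $x_p(t)\in\mathcal S$ for all $t\ge t_0$. This is the claimed statement, where ``$x_0(t)$'' should be read as $x_p(t)$. Existence of solutions on $[t_0,\infty)$ is supplied by Theorem~\ref{Theorem_Basic_Adaptive_Controller}, given that $r$ is bounded.

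\textbf{Main obstacle.} The delicate step is justifying the use of Assumptions~\ref{ass:barrier_mismatch_bounds_min} and~\ref{ass:bounded_uncertainties_min} along the whole trajectory. The Lipschitz constants hold only "in the region of interest," and the parameter-error bounds must hold for all $t$. I would handle the parameter bounds via the Lyapunov function \eqref{Lyapunov_function1}: since $V$ is nonincreasing, $\widetilde\Theta_x$ and $\widetilde\Theta_r$ stay in a level set fixed by the initial conditions, and this can be taken to define $\overline\Theta_x$ and $\overline\Theta_r$. I would also assume that the region of interest contains the trajectory, stated as a standing hypothesis, rather than prove it.
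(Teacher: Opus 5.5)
Your proposal is correct and follows essentially the same route as the paper. You chain \eqref{eq:final_safety_condition_uncertainties} $\Rightarrow$ \eqref{eq:conservative_composite_barrier_condition} $\Rightarrow$ \eqref{eq:composite_barrier_condition_2} using Assumption~\ref{ass:barrier_mismatch_bounds_min} and the norm bound on $\dot e_x$ from \eqref{error_dyanmics_1} and Assumption~\ref{ass:bounded_uncertainties_min}, and your integrating-factor step is just the comparison lemma the paper cites, made explicit; the Lyapunov-based derivation of $\overline{\Theta}_x,\overline{\Theta}_r$ is not needed, since Assumption~\ref{ass:bounded_uncertainties_min} already supplies these bounds as hypotheses of the theorem.
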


\begin{proof}
\eqref{eq:conservative_composite_barrier_condition} is a sufficient condition for \eqref{eq:composite_barrier_condition_2}.
Applying Assumption~\ref{ass:barrier_mismatch_bounds_min} to \eqref{eq:conservative_composite_barrier_condition} yields
\begin{equation}
\dot h( x_m) \ge -\gamma h( x_m) + L_2\|\dot{ e}_x\| + \gamma L_1\| e_x\|.
\label{eq:cbf_with_bounds_pf_min}
\end{equation}
Additionally, using \eqref{error_dyanmics_1} and Assumption~\ref{ass:bounded_uncertainties_min}, we obtain
\begin{align}
    \|\dot{ e}_x\| &\le \|A_m\|\,\| e_x\| + \|B_p\|\,\|\Lambda\|\left(\|\widetilde{\Theta}_x\|\,\| x_p\| + \|\widetilde{\Theta}_r\|\,\| r\|\right) \notag \\
    &\le \|A_m\|\,\| e_x\| + \|B_p\|\,\overline{\Lambda}\left(\overline{\Theta}_x\| x_p\|+\overline{\Theta}_r\| r\|\right). \label{eq:exdot_bound_worst_pf_min}
\end{align}
Substituting \eqref{eq:exdot_bound_worst_pf_min} into \eqref{eq:cbf_with_bounds_pf_min} yields precisely \eqref{eq:final_safety_condition_uncertainties}. Hence,
\eqref{eq:final_safety_condition_uncertainties}
implies \eqref{eq:conservative_composite_barrier_condition}, which implies \eqref{eq:composite_barrier_condition_2}. Together with \eqref{eq:composite_barrier_condition_2}, this shows that the plant satisfies \eqref{eq:standard_CBF_condition_plant}. Finally, since $h( x_p(t_0))\ge 0$ by hypothesis and \eqref{eq:standard_CBF_condition_plant} holds, the scalar comparison lemma implies $h( x_p(t))\ge 0$ for all $t\ge t_0$, proving \eqref{eq:safety_requirement} \cite{Khalil_2002}.
\end{proof}

\subsection{SOCP-based safe reference signal generation} 
\label{subsec:SOCP}

Given a bounded nominal reference input $r^*$ - e.g., one designed for reference tracking - we would like to choose $r$ as close to $r^*$ as possible such that \eqref{eq:final_safety_condition_uncertainties} is satisfied. However, $r$ appears nonlinearly in \eqref{eq:final_safety_condition_uncertainties}, preventing a standard QP formulation as in \cite{Ames_2019}. In this section, we show how to choose $r$ as the solution to a second-order cone problem (SOCP). First, in order for an $r$ satisfying \eqref{eq:final_safety_condition_uncertainties} to exist, the following condition must hold:
\begin{assumption}
\label{ass:cbf_authority}
Let $E > 0$ be any quantity such that $\|e_x(t)\| \leq E\ \forall t \geq t_0$. Then, there exists a $d > 0$ such that, for every $x_m \in \mathcal{S}$ where $\|\frac{\partial h}{\partial x}|_{x_m} B_m\| \leq L_2\|B_p\|\,\overline{\Lambda}\,\overline{\Theta}_r + d$, the following inequality holds:
\begin{equation*}
\begin{aligned}
    \frac{\partial h}{\partial x}\Big|_{x_m}&A_mx_m + \gamma h(x_m) \geq \max_{x \in \mathbb{R}^n : \|x - x_m\| \leq E}\\
    &\left\{(\gamma L_1 + L_2\|A_m\|)\|x - x_m\| + L_2\|B_p\|\,\overline{\Lambda}\,\overline{\Theta}_x\|x\|\right\}.
\end{aligned}
\end{equation*}
\end{assumption}
\begin{remark}
    We know from Theorem \ref{Theorem_Basic_Adaptive_Controller} that a finite $E > 0$ exists. Assumption \ref{ass:cbf_authority} is the requirement that $r = 0$ satisfies \eqref{eq:final_safety_condition_uncertainties} wherever the control authority over $h$, quantified by $\|\frac{\partial h}{\partial x}|_{x_m}B_m\|$, is dominated by the uncertainty in the input matrix, quantified by $L_2\|B_p\|\,\overline{\Lambda}\,\overline{\Theta}_r$. This assumption is similar to Assumption 3.2 in \cite{Fisher2026}, and may be relaxed by modifying \eqref{eq:final_safety_condition_uncertainties} similarly to Section III.E in \cite{Fisher2026}.
\end{remark}

To obtain an SOCP, we first
%
introduce an auxiliary scalar $\eta \in \mathbb{R}_+$ as a conic upper bound on the reference norm,
\begin{equation}
\|{r}\| \leq \eta.
\label{eq:eta_definition}
\end{equation}
Substituting \eqref{eq:eta_definition} and \eqref{eq:hdot_xm} into \eqref{eq:final_safety_condition_uncertainties}
yields the
safety condition
\begin{equation}
a^\top  r + c\,\eta \ge \beta,
\label{eq:affine_safety_constraint}
\end{equation}
where
\begin{align}
a &:=
\left(
\left.\frac{\partial h}{\partial {x}}\right|_{{x}_m}
B_m
\right)^\top,
\label{eq:a_definition}
\\[4pt]
c &:=
L_2\|B_p\|\,\overline{\Lambda}\,\overline{\Theta}_r,
\label{eq:c_definition}
\\[4pt]
\beta
&:=
\left.\frac{\partial h}{\partial {x}}\right|_{{x}_m}
A_m  x_m
+ \gamma h( x_m) - \Delta( x_p, e_x),
\label{eq:beta_definition}\\[4pt]
\Delta({x}_p,{e}_x)
&:=
\gamma L_1\| e_x\|
+ L_2\|A_m\|\,\| e_x\|
\notag\\
&\quad
+ L_2\|B_p\|\,\overline{\Lambda}\,\overline{\Theta}_x\,\| x_p\|.
\label{eq:Delta_definition}
\end{align}

The goal of the safety filter is to determine a reference command $ r$
that remains as close as possible to a desired reference $ r^*$ while
penalizing the conservatism introduced by the auxiliary variable $\eta$.
This leads to the optimization problem
\begin{equation}
\min_{ r,\eta}\;\| r- r^*\| + \rho\,\eta,
\label{eq:socp_objective}
\end{equation}
where $\rho>0$ is a user-defined trade-off parameter balancing tracking
performance and conservatism. To express this problem in standard SOCP form,
we introduce an auxiliary scalar $v\in\mathbb{R}_+$ satisfying
\begin{equation}
\| r- r^*\| \le v,
\label{eq:epigraph_soc}
\end{equation}
which corresponds to an epigraph reformulation of the quadratic objective in \eqref{eq:socp_objective}.
The equivalent convex formulation therefore becomes
\begin{equation}
\min_{ r,\eta,v}\; v + \rho\,\eta .
\label{eq:socp_objective_linear}
\end{equation}

The reference-level safety filter can therefore be written as the following SOCP:
\begin{equation}
\begin{aligned}
\min_{ r,\eta,v} \quad &
v + \rho\,\eta \\[4pt]
\text{s.t.}\quad
& \| r- r^*\| \le v,\\
& \| r\| \le \eta,\\
& -a^\top  r - c\,\eta \le -\beta .
\end{aligned}
\label{eq:socp_filter}
\end{equation}

\section{Simulations}
\label{Simulation}
\begin{figure}
    \centering
    \includegraphics[width=\linewidth]{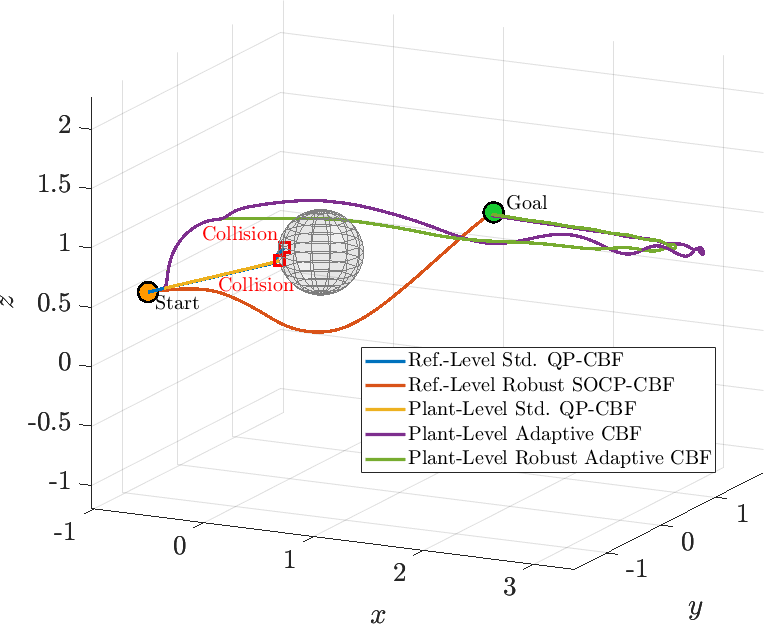}
    \caption{3D quadrotor trajectories for different safety filters under model mismatch.}
    \label{fig:results1}
\end{figure}

Numerical simulations were conducted in MATLAB to evaluate the proposed safety architecture on a quadrotor platform. An adjusted translational model of the quadrotor dynamics from \cite{Dydek12} has been used. The state vector consists of position and velocity components $x=[p_x,p_y,p_z,v_x,v_y,v_z]^T$, while the control inputs represent accelerations. For the considered simulation scenario, the quadrotor model is adjusted to fit the LTI system structure introduced in \eqref{eq:LinearPlantModel_Problem}. To evaluate robustness, the true plant differs from the nominal model through static deviations in both the drift dynamics and the actuator effectiveness. The simulated plant therefore uses $A_{p,\mathrm{sim}}=\Delta A_p A_p$ and $\Lambda_{\mathrm{sim}}\neq I$, representing uncertainty in both the unforced dynamics and the input effectiveness. The considered task is obstacle avoidance in three-dimensional space. The quadrotor is initialized at a start position and must reach a goal location while avoiding a static spherical obstacle located between the start and the goal. Safety is enforced through a CBF of the form $h(x)=\|p-p_o\|^2-r_o^2\ge0$, where $p$ denotes the quadrotor position, $p_o$ is the obstacle center, and $r_o$ its radius.

The simulation compares several safety enforcement strategies, including a nominal model-based standard plant-level QP-CBF \cite{Ames_2019}, adaptive CBF \cite{Ames_2020a}, robust adaptive CBF \cite{lopez2020adaptive}, and standard QP-CBF applied on the reference-level. In addition, the proposed reference-level robust SOCP-CBF is evaluated.
\begin{figure*}
    \centering
    \includegraphics[width=\linewidth]{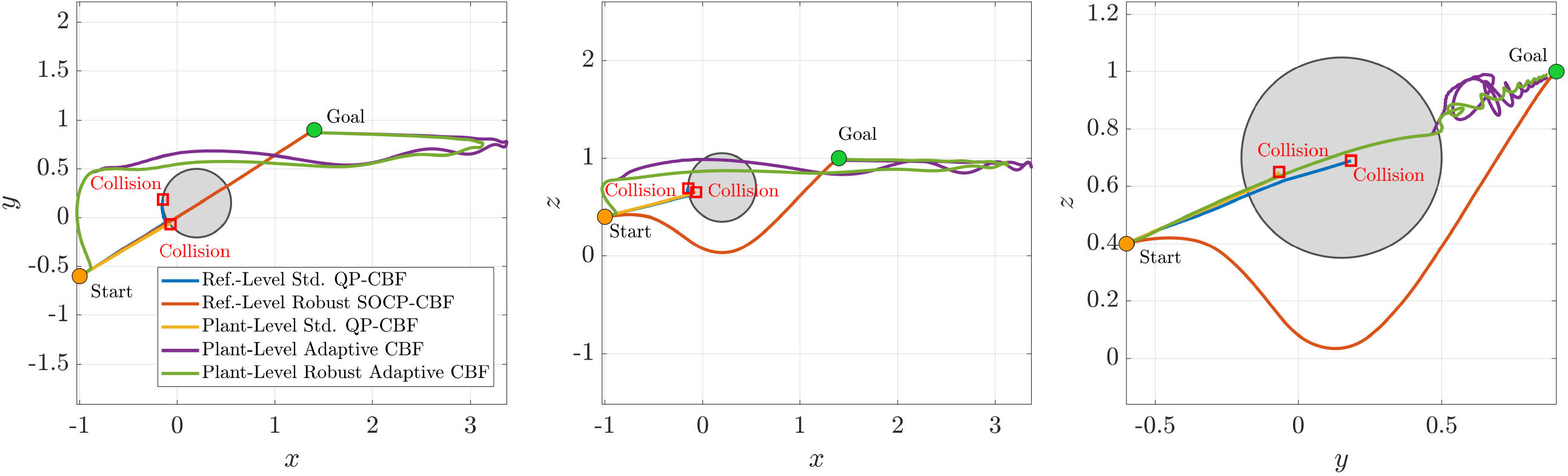}
    \caption{Planar projections of the trajectories for different safety filters under model mismatch.}
    \label{fig:results2}
\end{figure*}
Fig.~\ref{fig:results1} and Fig.~\ref{fig:results2} illustrate the resulting trajectories for the different approaches. As expected, the standard plant-level CBF and the reference-level CBF without robustness fail to guarantee safety under the considered model mismatch, leading to collisions with the obstacle. {\color{black}The adaptive CBF and robust adaptive CBF approaches maintain safety despite the plant uncertainties, but exhibit more oscillatory trajectories. This behavior can be attributed to the interaction between the adaptive controller, which adapts to improve tracking of the nominal reference, and the adaptive safety filter, which simultaneously modifies the effective control action to enforce safety under uncertain input effectiveness. In contrast, the proposed reference-level robust SOCP-CBF modifies the reference command before tracking, thereby avoiding this conflict between tracking and safety enforcement. As a result, it maintains safety while producing smoother trajectories that avoid the obstacle with a consistent safety margin and converge efficiently toward the goal.} 
Overall, the simulation results demonstrate that the proposed SOCP-based reference-level safety filter provides a robust mechanism for enforcing safety under structured plant uncertainty while maintaining smooth closed-loop behavior and efficient goal convergence..

\section{Conclusions}
\label{Conclusions}
This paper proposes a reference-based safety architecture in which a CBF-based safety filter certifies safe reference commands for a known reference model, while an adaptive controller drives the uncertain plant to track the certified trajectory. The key contribution is a robust reference-level CBF condition that explicitly accounts for a transient plant–reference mismatch using Lipschitz bounds, resulting in a safety constraint whose conservatism decreases as tracking improves. To handle the resulting norm-dependent robustness terms efficiently, the safety filter is formulated as a convex SOCP using an auxiliary conic bound and a penalty on conservatism, yielding a least-invasive certified command. Smulation-based comparisons using a quadcopter obstacle-avoidance scenario were conducted to evaluate adaptive and non-adaptive safety-filter approaches under structured model mismatch conditions. The results show that the proposed SOCP-based filter maintains safety while producing smoother trajectories. Overall, the framework provides a tractable mechanism for enforcing forward invariance in adaptive closed-loop systems under structured uncertainty.

\bibliographystyle{IEEEtran}
\bibliography{./Bibliography/references} 

\end{document}